\title{\LARGE \bf Stability of Gradient Learning Dynamics in Continuous Games: \\
Vector Action Spaces}%
\author{%
Benjamin J. Chasnov,  Daniel Calderone, Beh\c cet A\c c\i kme\c se, Samuel A. Burden, Lillian J. Ratliff%
\thanks{B. Chasnov, S. Burden, and L.J. Ratliff are with the Department of  Electrical and Computer Engineering, 
        University of Washington, Seattle, WA 98115
        {\tt\small $\{$bchasnov,sburden,ratliffl$\}$@uw.edu}}
        \thanks{D. Calderone and B. A\c c\i kme\c se are with the Department of Aeronautics and Astronautics,  University of Washington, Seattle, WA 98115
        {\tt\small $\{$djcal,behcet$\}$@uw.edu}}
\thanks{Funding for this work is provided by
NSF Award \#1836819
and NIH 5T90DA032436-09.}
}
\begin{document}

\maketitle
\thispagestyle{empty}
\pagestyle{empty}

\begin{abstract}
 Towards characterizing the optimization landscape of games,
this paper analyzes the stability of gradient-based dynamics near fixed points of two-player continuous games.
We introduce the quadratic numerical range as a method to characterize the spectrum of game dynamics and prove the robustness of equilibria to variations in learning rates.
By decomposing the game Jacobian into symmetric and skew-symmetric components, we 
assess the contribution of a vector field's potential and rotational components to the stability of differential Nash equilibria.
Our results show that in zero-sum games, all Nash are stable and robust; in potential games, all stable points are Nash.
For general-sum games, we provide a sufficient condition for instability.
We conclude with a numerical example in which learning with
timescale separation results in faster convergence.

\end{abstract}

\section{Introduction}
 The study of learning in games 
 is experiencing a resurgence in
 the control theory \cite{ratliff2016characterization,tang2019distributed,tatarenko2018learning},
 optimization \cite{mazumdar2018fundamental,mertikopoulos2019learning}, and 
 machine learning \cite{bu2019global,chasnov2019convergence,goodfellow2014gans,metz2016unrolled,fiez2019convergence} communities. 
 Partly driving this resurgence 
 is the prospect
 for game-theoretic
 analysis to yield
 machine learning algorithms
 that generalize better or are more robust.
A natural paradigm for learning in games is gradient play since updates in large decision spaces can be performed locally
while still guaranteeing local convergence in many problems~\cite{chasnov2019convergence,mertikopoulos2019learning}.

 Towards understanding the optimization landscape in such formulations, dynamical systems theory is emerging as a principal tool for analysis and ultimately synthesis~\cite{mazumdar2018fundamental,boone2019darwin,mertikopoulos2018cycles,berard2019closer,balduzzi2020smooth}.
One of the 
primary means to understand the 
optimization landscape 
of games is the eigenstructure and spectrum of the Jacobian of the learning dynamics in a neighborhood of a 
stationary point. However, as has been demonstrated~\cite{mazumdar2018fundamental}, not all attractors of the learning dynamics are game theoretically meaningful. Furthermore, structural heterogeneity in the learning algorithms employed by players can drastically change the convergence behavior.

\begin{figure}
\includegraphics[width=.49\linewidth]{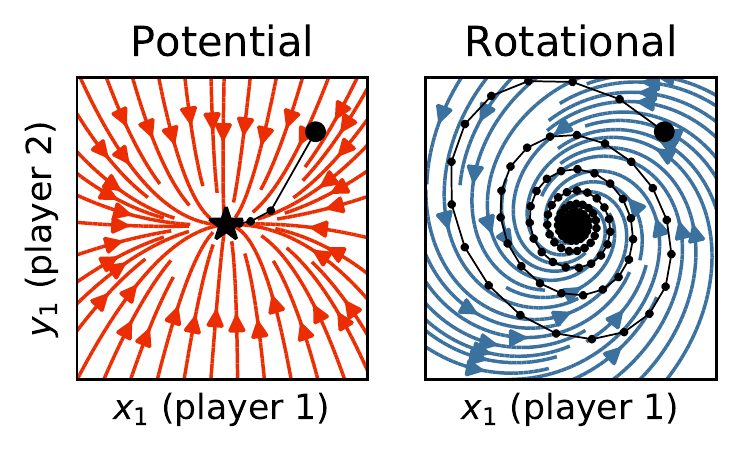}
\includegraphics[width=.49\linewidth]{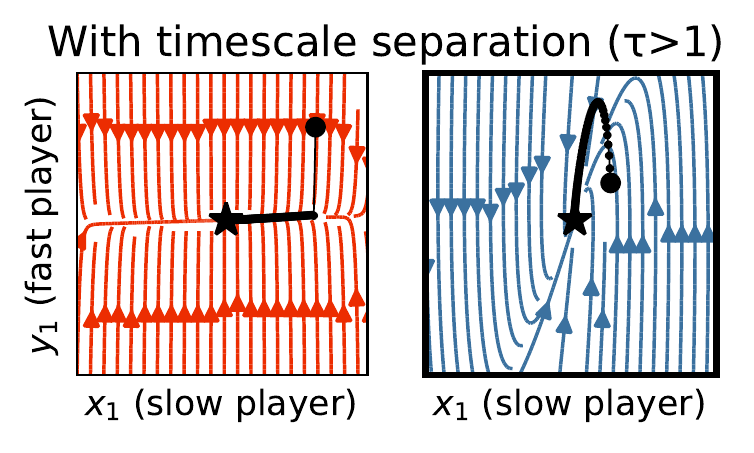}
\caption{\emph{Game dynamics with rotational components can converge at a faster rate with timescale separation.}
We plot slices of the
vector field and learning trajectories of mostly potential and mostly rotational learning dynamics from Example~\ref{ex:potrot-timescale}.
The game Jacobian at the equilibrium decomposes into $J=(1-\vep)S+\vep A$, where $S=S^\top$ is symmetric and $A=-A^\top$ is skew-symmetric. For the mostly potential system (red, $\vep=0.1$), players converge to the equilibrium without cycling.
For the mostly rotational system (blue, $\vep=0.9$), players without timescale separation cycle around the equilibrium.
Players with timescale separation take advantage of the rotational vector field to converge faster to the equilibrium, as shown in the right-most plot. See Fig.~\ref{fig:potrot} for a continuation of this example.}
\label{fig:potrot-vectorfields}
\end{figure}

The local stability of a hyperbolic 
 fixed point in a non-linear system can be assessed by examining the
 eigenstructure of the linearized dynamics~\cite{sastry1999nonlinear,khalil2002nonlinear}. However, in a game context extra structure comes from the underlying game---that is, players are constrained to move only along directions over which they have control. 
 They can only control their individual actions, as opposed to the entire state of the dynamical system corresponding to the learning rules being applied by the agents. 
 It has been observed in earlier work that not all stable attractors of gradient play are local Nash equilibria and not all local Nash equilibria are stable attractors of gradient play~\cite{mazumdar2018fundamental}. Furthermore, changes in players' learning rates---which corresponds to scaling rows of the Jacobian---can change an equilibrium from being stable to unstable and vice versa~\cite{chasnov2019convergence}.

To summarize, there is a subtle but extremely important difference between learning dynamics and traditional nonlinear dynamical systems:  alignment conditions are important for distinguishing between equilibria that have game-theoretic meaning versus those which are simply stable attractors of learning rules. Furthermore, features of learning dynamics such as learning rates can play an important role in shaping not only equilibria  but also alignment properties. Motivated by these observations, along with the recent resurgence of applications of learning in games in control, optimization, and machine learning, in this paper we provide an in-depth analysis of the spectral properties of gradient-based learning in two-player continuous games.

\textbf{Contributions:} Our main results are 
bounds on the spectrum of gradient-based learning dynamics near equilibria (\autoref{prop:zero-sum-spectrum}, \autoref{prop:pot-spectrum})
and robustness guarantees of differential Nash equilibria to variations in learning rates (\autoref{prop:zerosumrobust},
\autoref{prop:potentialrobust})
in two important classes of two-player continuous games: zero-sum (adversarial) and potential (implicitly cooperative). 
Moreover, we prove a sufficient condition for instability of learning dynamics (\autoref{thm:block-instability}). Finally, we include numerical examples  (Section~\ref{sec:exampleslink}) which provide further insights into the theoretical results.

More restrictive results applicable only in scalar action spaces were presented in an earlier conference paper~\cite{chasnov2020stability}. The present paper concerns the more general case of vector action spaces, introduces a novel decomposition of a general-sum game into its zero-sum and potential pieces, and applies a new analysis tool (quadratic numerical rage) to study stability of learning.



\section{Preliminaries}
\label{sec:setup}

This section contains game-theoretic preliminaries, mathematical formalism, and
a description of the gradient-based learning paradigm studied in this paper.

\subsection{Game-Theoretic Preliminaries}
A $2$-player \emph{continuous game} 
$\mc G = (\cost_1, \cost_2)$ 
is a pair of cost functions defined on a shared strategy space
$X=X_1\times X_2$
where player (agent)
$i \in \mc I=\{1,2\}$ 
has cost $f_i:X\to \mb{R}$. In this paper, the results apply to games with smooth costs $\cost_i\in C^r(
X,
\mb{R})$ for $r=2$.
Agent $i$'s set of feasible strategies 
is the $\dimm_i$-dimensional open and precompact set
$X_i \subseteq \mb R^{\dimm_i}$. 

The most common and arguably natural notion of an equilibrium in continuous games is due to  Nash~\cite{nash1951non}.
\begin{definition}[Local Nash equilibrium] A joint action profile $\fpx=(\fp{x_1},\fp{x_2})\in W_1\times W_2\subset X_1\times X_2$ is a local Nash equilibrium on $W_1\times W_2$ if, for each player $i\in\mc{I}$,
$f_i(\fp{x_i},\fp{x_{-i}})\leq f_i(x_i,\fp{x_{-i}})$, $\forall x_i \in W_i$.
\end{definition}
A local Nash equilibrium can equivalently be defined as in terms of best response maps: $\fp{x_i}\in \arg\min_{x_i}f_i(x_i,\fp{x_{-i}})$.  
From this perspective, local optimality conditions for players' optimization problems give rise to the notion of a differential Nash equilibrium~\cite{ratliff2013characterization,ratliff2016characterization}; non-degenerate differential Nash are known to be generic and structurally stable amongst local Nash equilibria in sufficiently smooth games~\cite{ratliff2014allerton}. Let $D_if_i$ denote the derivative of $f_i$ with respect to $x_i$ and, analogously, let $D_i(D_if_i)\equiv D_i^2f_i$ be player $i$'s individiaul Hessian.
\begin{definition}
\label{def:nash}
For continuous game $\mc{G}=(\cost_1, \cost_2)$ where $\cost_i \in C^2(X_1 \times X_2,\R)$,
 a joint action profile $(\fp{x_1},\fp{x_2}) \in X_1 \times X_2$ is a 
 \emph{differential Nash equilibrium}
 if 
 $D_if_i(\fp{x_1}, \fp{x_2}) = 0$ and $D_i^2f_i(\fp{x_1},\fp{x_2})> 0$ for each $i\in \mc{I}$.
\end{definition}
A differential Nash equilibrium is a strict local Nash equilibrium \cite[Thm.~1]{ratliff2013characterization}. Furthermore, the conditions $D_if_i(\fpx)=0$ and $D_i^2f_i(\fpx)\geq 0$ are necessary for a local Nash equilibrium \cite[Prop.~2]{ratliff2013characterization}.

Learning processes in games, and their study, arose as one of the explanations for how players grapple with one another in seeking an equilibrium~\cite{fudenberg1998theory}. In the case of sufficiently smooth games, gradient-based learning is a natural learning rule for myopic players.

\subsection{Gradient-based Learning as a Dynamical System %
}
At timestep $\kk\in\mathbb{N}$, 
a myopic agent $i$
updates its current action $x_i(\kk)$ by following the gradient of its
individual cost $\obj_i$
given the decisions of its competitors $x_{-i}$.
The synchronous adaptive process that arises is the discrete-time 
dynamical system
\begin{equation}
    x_i(t+1) =x_{i}(t)-\gamma_i D_if_i(x_i(t),x_{-i}(t))
    \label{eq:gradientplay}
\end{equation} for each $i\in \mc{I}$ where $D_if_i$ is the gradient of player $i$'s cost with respect to $x_i$ and $\gamma_i$ is player $i$'s learning rate. 

\subsubsection{Stability}
Recall that a matrix $A$ is called Hurwitz if its spectrum lies in the open left-half complex plane $\mb{C}_-^\circ$. Furthermore, we often say such a matrix is \emph{stable} in particular when $A$ corresponds to the dynamics of a linear system $\dot{x}=Ax$ or the linearization of a nonlinear system around a fixed point of the dynamics.\footnote{The Hartman-Grobman theorem~\cite{sastry1999nonlinear} states that around any hyperbolic fixed point of a nonlinear system, there is a neighborhood on which the nonlinear system is stable if the spectrum of the Jacobian lies in $\mb{C}_-^\circ$.}  

It is known that \eqref{eq:gradientplay} will converge locally asymptotically to a differential Nash equilibrium if the local linearization is a contraction~\cite{chasnov2019convergence}. 
Let 
\eqnn{g(x)=(D_1f_1(x),D_2f_2(x))
\label{eq:gameform}}
be the vector of individual gradients and let $Dg(x)$ be its Jacobian---i.e., the \emph{game Jacobian}. Further, let $\spec A\subset\mb{C}$
denote the \emph{spectrum} of the matrix $A$, and $\rho(A)$ its \emph{spectral radius}. 
Then, $\fpx$ is \emph{locally exponentially stable} if and only if $\rho(I-\gamma_1 \Lambda Dg(\fpx))<1$, where $\tau = \gamma_2/\gamma_1$ and $\Lambda=\blockdiag(I_{\dimm_1},\tau I_{\dimm_2})$
is a diagonal matrix and $I_{\dimm_i}$ is the identity matrix of dimension $\dimm_i$. 
The map $I-\gamma_1 \Lambda Dg(\fpx)$ is the local linearization of \eqref{eq:gradientplay}. 
Hence, to study stability (and, in turn, convergence) properties it is useful to analyze the spectrum of not only the map $I-\gamma_1 \Lambda Dg(\fpx)$ but also $\Lambda Dg(\fpx)$ itself. 

\subsubsection{Partitioning the Game Jacobian} 
\label{subsec:partition}
Let 
$\fixedpoint{x}=(\fixedpoint{x}_1,\fixedpoint{x}_2)$ 
be a joint action profile such that 
$g(\fixedpoint{x})=0$.
Towards better understanding the spectral properties of $Dg(\fixedpoint{x})$ (respectively, $\Learnrate Dg(\fixedpoint{x})$), we 
partition~$Dg(\fixedpoint{x})$ into blocks:
\eqnn{%
 J(\fpx)=
  \bmat{%
  -D_1^2\cost_1(\fpx) & 
  -D_{12}\cost_1(\fpx) \\
  -D_{21}\cost_2(\fpx) & 
  -D_{2}^2\cost_2(\fpx)
 }
 =
 \bmat{\Aa & \Bb \\
  \Cc & \Dd}.
 \label{eq:game-jacobian}
}
A differential Nash equilibrium (the second order conditions of which are sufficient for a local Nash equilibrium) is such that 
$\Aa<0$ and
$\Dd<0$. On the other hand, as noted above, $J$ is Hurwitz or stable if $\spec \jac \subset \mb{C}_-^\circ$. 
Moreover,
since the diagonal blocks are symmetric,
$\J$ is 
similar to the matrix 
in Fig~\ref{fig:blockdiag2p}.
For the remainder of the paper, 
we will study the $Dg$ at a given
fixed point $\fixedpoint{x}$ as defined in~\eqref{eq:game-jacobian}.
\begin{figure}[th]
  \begin{minipage}[c]{0.4\columnwidth}
	$
 J(\fixedpoint{\x},\fixedpoint{\y})
 \sim \left[\begin{tikzpicture}[scale=0.5,baseline={([yshift=-.5ex]current bounding box.center)}]

		\draw [ultra thick](0,2) -- (.7,1.3);
		\draw [ultra thick] (.7,1.3) -- (2,0);
		\fill [color=gray,color=gray,draw=white,very thick] 
		(0,0) rectangle (.7,1.3);
		\fill [color=gray,draw=white,very thick] 
		(.7,1.3) rectangle (2,2);
	\end{tikzpicture}\right]
 $
  \end{minipage}\hfill
  \begin{minipage}[c]{0.59\columnwidth}
    \caption{\emph{Similarity}: the game Jacobian in~\eqref{eq:game-jacobian} 
    is similar to a matrix with diagonal block-diagonals. The off-diagonals are arbitrary.
    }
  \label{fig:blockdiag2p}
  \end{minipage}
\end{figure}

\subsubsection{Classes of Games} 
Different classes of games can be characterized via $J$.   For instance,  a 
 \emph{zero-sum game}, where $\cost_1\equiv -\cost_2$, is such that $J_{12}=-J_{21}^\top$. On the other hand, %
 a game
 $\mc G = (\cost_1, \cost_2)$
 is a \emph{potential game} if and only if
 $D_{12}f_1 \equiv D_{21}f_2^\top$~\cite[Thm.~4.5]{monderer1996potential}, 
 which implies that $J_{12}=J_{21}^\top$.

\subsection{Spectrum of Block Matrices}
\label{sec:qnr}
One useful tool for characterizing the spectrum of a block operator matrix is the numerical range and quadratic numerical range, both of which 
contain the operator's spectrum~\cite{tretter2008spectral} and therefore all of its eigenvalues.
The \emph{numerical range} of $\J$ is defined by
\eqn{
   \NR(\J)=\{
\langle \J x,x\rangle:\ 
x\in \mb{C}^{\dimm_1+\dimm_2},\ %
\|x\|=1\} \subset \mb{C}, 
}
and is convex. 
Given a block operator $J$,
let
\begin{equation}
\J_{v,w}=\bmat{\langle \J_{11}v,v\rangle & \langle\J_{12}w,v\rangle\\ \langle\J_{21}v,w\rangle & \langle\J_{22}w,w\rangle} 
\label{eq:jvw}
\end{equation}
where $v\in \mb{C}^{\dimm_1}$ and $w\in \mb{C}^{\dimm_2}$.
The \emph{quadratic numerical range} of $\J$,
defined by
\eqnn{
\NR^2(\J)=\bigcup_{v\in \mc{S}_1, w\in \mc{S}_2 }\spec(\J_{v,w}),
\label{eq:qnr}}
is the union of the spectra of~\eqref{eq:jvw}
 where $\spec(\cdot)$ denotes the spectrum 
 of its argument 
 and $\mc{S}_i=\{v\in \mb{C}^{\dimm_i}:\ \|v\|=1\}$,
It is, in general, a non-convex subset of $\mb C$.
The quadratic numerical range~\eqref{eq:qnr}
is equivalent to the set of solutions of 
the characteristic polynomial
\eqnn{%
  \eig^2 
  &- \eig(\langle \Aa v,v\rangle  
  + \langle \Dd w,w\rangle ) 
  + \langle \Aa v,v\rangle\langle \Dd w,w\rangle  \\
  &- \langle \Bb v,w\rangle \langle \Cc w,v\rangle  = 0
  \label{eq:characteristic-polynomial}
}
for  $v\in \mc{S}_1$ and $w\in \mc{S}_2$. We use the notation $\langle \J x,y\rangle=\conjugate{x} \J y$ 
to denote the inner product. Note that $\NR^2(\J)$ is a subset of $\NR(\J)$ and, as previously noted, contains $\spec(\J)$.
Albeit non-convex, $\NR^2(\J)$ provides a tighter characterization of the spectrum.\footnote{There are numerous computational approaches for estimating the $\NR(\cdot)$ and $\NR^2(\cdot)$ (see, e.g., \cite[Sec. 6]{langer2001new}).}

Observing that the quadratic numerical range for a block $2\times 2$ matrix $J$ derived from a game on a finite dimensional Euclidean space reduces to characterizing the spectrum of $2\times 2$ matrices, 
we first characterize stability properties of scalar $2$-player continuous games.

\section{Stability in Zero-Sum and Potential Games}
\label{sec:scalingup}

In this section, we give stability results for
$2$-player continuous 
games on vector action spaces.
Consider a game $(f_1, f_2)$. Recall from the preliminaries that
$f_1,f_2\in C^2(X_1\times X_2,\mb{R})$ and
$X_1\subseteq \R^{d_1}, X_2\subseteq \R^{d_2}$,
are $d_i$-dimensional actions spaces.
Let $\fpx$ be a fixed point of~\eqref{eq:gameform} such that $g(\fpx)=0$.
We study the  gradient learning dynamics given in~\eqref{eq:gradientplay} near fixed points $\fpx$ by analyzing the spectral properties of the Jacobian of $g$.
\subsection{Jacobian Decomposition}
We decompose the game Jacobian,
\eqnn{%
\label{eq:blockdecomp}
\J(\fpx) =
\bmat{\Aa & \Pp \\ \Pp^\top & \Dd} + 
\bmat{0 & \Kk \\ -\Kk^\top & 0},
}
where $P=\frac{1}{2}(J_{12} + J_{21}^\top)$ and $Z=\frac{1}{2}(J_{12} - J_{21}^\top)$.  
As we will see, $P$ represents the potential-like part and $Z$ represents the zero-sum part.
The spectrum of $J(\fpx)$ is contained in the quadratic numerical range $\mc{W}^2(J(\fpx))$ which contains the spectrum of the matrices
\eqnn{
\label{eq:decomp:2x2}
J_{v,w} = \bmat{a & {p}+{z} \\ \conjugate{p}-\conjugate{z} & d}
}
where $a=\langle J_{11}v,v\rangle$, $d=\langle J_{22} w, w\rangle$ 
$p = \langle Pv,w \rangle$, and $z=\langle Z w, v\rangle$ constructed for unit-length complex numbers $v\in \mc{S}_1, w\in\mc{S}_2$.
Hence, to show the stability of a particular fixed point $\fpx$,
we must show that
for all $v\in \mc{S}_1, w\in\mc{S}_2$,
the spectrum of~\eqref{eq:decomp:2x2} 
is contained in the left-half complex plane.


For game $(f_1,f_2)$ with Jacobian $J(\fpx)$, define
the following for $i=1,2$: 
$\lambda_i^- =\min \spec (J_{ii}),\ 
\lambda_i^+ =\max \spec (J_{ii}).$
Additionally, define
\begin{alignat*}{3}
\lambda^- &=\min\{\lambda_1^-,\lambda_2^-\},\ 
\quad &&\underline{\lambda} &&=\tfrac{1}{2}(\lambda_1^-+\lambda_2^-),\\
\lambda^+ &=\max\{\lambda_1^+,\lambda_2^+\},\ 
&&\overline{\lambda} &&=\tfrac{1}{2}(\lambda_1^++\lambda_2^+).
\end{alignat*}
These terms depend on the spectrum of the players' individual Hessians and will be useful in deriving bounds on the spectrum of $J(\fpx)$.


\subsection{Zero-sum games ($P=0,Z=J_{12}$)}
The next theorem is our main result on zero-sum games, giving tight bounds on the spectrum of $J$ (i.e. bounds on the real and imaginary eigenvalues) near fixed points of the game.
Recall that for zero-sum game $(f, -f)$, the interaction term is $Z=-D_{12}f(x)$.
\begin{theorem}
[Spectrum of Zero-Sum Game Dynamics]
\label{prop:zero-sum-spectrum}
    Consider a zero-sum game $\mc{G}=(f,-f)$ and fixed point $\fpx$.
The Jacobian $J(\fpx)=-Dg(\fpx)$ of the dynamics
    $\dot{x}=-g(x)$ is such that
        \eqnn{\spec \jac \cap \mb{R}\subset \left[\lambda^-, \lambda^+\right]
        \label{eq:realeigens}}
    and $\spec \jac \backslash \mb{R}$ is contained in
    \begin{equation}
        \left\{z\in \mb{C}:\
        \mathrm{Re}(z)\in\left[\underline{\lambda}, \overline{\lambda}\,\right], \ |\mathrm{Im}(z)|\leq 
        \|Z\|\right\}
        \label{eq:imageigens}.
    \end{equation}
    Furthermore, if $\lambda_2^+<\lambda_1^-$ or $\lambda_1^+< \lambda_2^-$
    then the following two implications hold for
    $\delta=\lambda_1^--\lambda_2^+$ or $\delta=\lambda_2^--\lambda_1^+$,
    respectively:
(i) $\|Z
\|\leq \delta/2\ \implies\ \spec(J(x))\subset
            \mb{R}$;
(ii) $\|Z
\|>\delta/2\ \implies \
            \spec(J(x))\backslash \mb{R}\subset \{z\in \mb{C}: \ |\mathrm{Im}(z)|\leq
            \sqrt{\|Z
            \|^2-\delta^2/4}\}$.
\end{theorem}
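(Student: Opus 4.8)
The plan is to reduce everything to the $2\times 2$ matrices $J_{v,w}$ furnished by the quadratic numerical range and then read the bounds directly off the roots of their characteristic polynomial. Since $\spec(J(\fpx))\subset \NR^2(J(\fpx))=\bigcup_{v,w}\spec(J_{v,w})$, every eigenvalue of $J(\fpx)$ is a root of \eqref{eq:characteristic-polynomial} for some unit $v\in\mc{S}_1$, $w\in\mc{S}_2$. In the zero-sum case $P=0$, so $p=0$ and \eqref{eq:decomp:2x2} collapses to $J_{v,w}=\bmat{a & z\\ -\conjugate{z} & d}$, with $a=\langle \Aa v,v\rangle$ and $d=\langle \Dd w,w\rangle$ real (Rayleigh quotients of the symmetric diagonal blocks) and $z=\langle Z w,v\rangle$. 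Solving the resulting quadratic yields the two roots
\[
\lambda_\pm=\tfrac12(a+d)\pm\tfrac12\sqrt{(a-d)^2-4|z|^2},
\]
and this single formula drives the whole argument.

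First I would record the three facts that turn this formula into the stated bounds: by Courant--Fischer, $a\in[\lambda_1^-,\lambda_1^+]$ and $d\in[\lambda_2^-,\lambda_2^+]$; and by Cauchy--Schwarz together with the operator-norm definition, $|z|=|\langle Z w,v\rangle|\le\|Z\|$. The proof then splits on the sign of the discriminant $(a-d)^2-4|z|^2$. When it is nonnegative, both roots are real, and since $\sqrt{(a-d)^2-4|z|^2}\le|a-d|$ one gets $\lambda_\pm\in[\min(a,d),\max(a,d)]\subset[\lambda^-,\lambda^+]$, giving \eqref{eq:realeigens}. When it is negative, the roots are complex conjugates with real part $\tfrac12(a+d)\in[\underline{\lambda},\overline{\lambda}]$ (immediate from the Rayleigh bounds on $a$ and $d$) and imaginary part of modulus $\tfrac12\sqrt{4|z|^2-(a-d)^2}\le|z|\le\|Z\|$, giving \eqref{eq:imageigens}.

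For the separation statements I would use the extra hypothesis to make the gap $|a-d|$ uniform in $(v,w)$. If $\lambda_2^+<\lambda_1^-$, then $a\ge\lambda_1^->\lambda_2^+\ge d$ for every admissible pair, so $a-d\ge\lambda_1^--\lambda_2^+=\delta>0$; the case $\lambda_1^+<\lambda_2^-$ is symmetric with $\delta=\lambda_2^--\lambda_1^+$. For (i), the bound $\|Z\|\le\delta/2$ forces $4|z|^2\le\delta^2\le(a-d)^2$, so the discriminant is nonnegative for every $v,w$ and all of $\NR^2(J(\fpx))$---hence $\spec(J(\fpx))$---is real. For (ii), on the complex branch I would substitute the uniform bound $(a-d)^2\ge\delta^2$ into the imaginary-part estimate to sharpen it to $\tfrac12\sqrt{4|z|^2-(a-d)^2}\le\sqrt{|z|^2-\delta^2/4}\le\sqrt{\|Z\|^2-\delta^2/4}$.

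The argument is essentially elementary once the quadratic numerical range is in hand. The only place demanding care is applying the discriminant-sign dichotomy pointwise in $(v,w)$ and ensuring that every eigenvalue of $J(\fpx)$---not merely of one fixed $J_{v,w}$---is covered by the appropriate branch. The main obstacle I anticipate is thus purely bookkeeping: confirming that the separation hypothesis produces a $(v,w)$-uniform lower bound on $|a-d|$, rather than merely on the fixed Hessian gaps, since this is exactly what allows the refined conclusions (i)--(ii) to hold simultaneously across the entire quadratic numerical range.
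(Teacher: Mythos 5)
Your proof is correct, and it rests on the same foundation as the paper's: the containment $\spec(J(\fpx))\subset \NR^2(J(\fpx))$ and the reduction to the $2\times 2$ matrices $J_{v,w}$, whose zero-sum structure gives real diagonal entries $a,d$ and off-diagonal entries with nonpositive product $-|z|^2$. The difference is one of execution. The paper's proof stops after verifying exactly these two structural facts (conjugate-symmetry of $\NR^2(J)$ from the symmetric Hessian blocks, and $-w^\ast D_{12}f^\top v\,v^\ast D_{12}f\,w\le 0$) and then cites \cite[Prop.~1.2.6]{tretter2008spectral} for \eqref{eq:realeigens}--\eqref{eq:imageigens} and \cite[Lem. 5.1-(ii)]{tretter2009spectral} for (i)--(ii); you instead re-derive those cited results from scratch via the explicit root formula $\lambda_\pm=\tfrac12(a+d)\pm\tfrac12\sqrt{(a-d)^2-4|z|^2}$ and the discriminant dichotomy. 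Your route buys transparency: it makes visible exactly where the zero-sum structure enters (the $-|z|^2$ term in the discriminant) and why the separation hypothesis yields the $(v,w)$-uniform gap $|a-d|\ge\delta$ that powers (i)--(ii) --- the bookkeeping point you flagged, which you resolve correctly, since real eigenvalues can only arise from pairs with nonnegative discriminant and nonreal ones only from pairs with negative discriminant. The paper's route buys brevity and generality, since Tretter's results apply to block operator matrices on Hilbert spaces rather than only to the finite-dimensional two-player case worked out here.
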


\begin{proof} 
Observe that $\overline{\det(J_{v,w}(x)-\lambda I)}=\det(J_{v,w}(x)-\bar{\lambda}I)$ for
    $v\in\mc{S}_1$ and $w\in\mc{S}_2$
since $D_1^2f(x)$ and $-D_2^2f(x)$ are symmetric, which 
 implies
    that $\NR^2(J(x))=\NR^2(J(x))^\ast$.
Since $-w^\ast D_{12}f(x)^\top vv^\ast D_{12}f(x)w\leq 0$, 
    \eqref{eq:realeigens} and \eqref{eq:imageigens} follow from \cite[Prop.~1.2.6]{tretter2008spectral}, and 
    (i) and (ii) follow from \cite[Lem. 5.1-(ii)]{tretter2009spectral}.
\end{proof}

The following result shows that for zero-sum games, all differential Nash equilibria are stable under the gradient dynamics. 
\begin{corollary}[Stability in Zero-Sum Games]
\label{prop:zs-stable}
Consider a zero-sum game $\mc{G}=(f,-f)$ 
on finite dimensional action spaces $X_1,X_2$. 
If $x^\star$ is a differential Nash equilibrium of $\mc{G}$, then $x^\star$ is a locally stable equilibrium of $\dot x = -g(x)$.
\end{corollary}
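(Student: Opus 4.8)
The plan is to show directly that the game Jacobian $J(\fpx)$ is Hurwitz and then invoke the standard linearization argument for $\dot x = -g(x)$. The crux is translating the differential Nash conditions into negative definiteness of \emph{both} diagonal blocks, being careful about the sign reversal for player $2$. Since $f_1 = f$ and $f_2 = -f$, the blocks of \eqref{eq:game-jacobian} are $J_{11} = -D_1^2 f(\fpx)$ and $J_{22} = -D_2^2 f_2(\fpx) = D_2^2 f(\fpx)$. The Nash conditions $D_1^2 f_1(\fpx) > 0$ and $D_2^2 f_2(\fpx) > 0$ read $D_1^2 f(\fpx) > 0$ and $-D_2^2 f(\fpx) > 0$, so that $J_{11} = -D_1^2 f(\fpx) < 0$ and $J_{22} = D_2^2 f(\fpx) < 0$. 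Hence both diagonal blocks are negative definite.

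From negative definiteness I would extract the scalar quantities appearing in the spectrum bounds. Every eigenvalue of $J_{ii}$ is strictly negative, so $\lambda_i^+ = \max \spec(J_{ii}) < 0$ for $i = 1,2$, whence $\lambda^+ = \max\{\lambda_1^+, \lambda_2^+\} < 0$ and $\overline{\lambda} = \tfrac{1}{2}(\lambda_1^+ + \lambda_2^+) < 0$. Now I apply \autoref{prop:zero-sum-spectrum}: by the real-eigenvalue containment \eqref{eq:realeigens}, any real eigenvalue of $J(\fpx)$ lies in $[\lambda^-, \lambda^+] \subset (-\infty, 0)$; by the containment \eqref{eq:imageigens} for non-real eigenvalues, any such eigenvalue $z$ has $\mathrm{Re}(z) \in [\underline{\lambda}, \overline{\lambda}\,] \subset (-\infty, 0)$. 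Combining the two cases gives $\spec J(\fpx) \subset \mb{C}_-^\circ$, i.e.\ $J(\fpx)$ is Hurwitz. Note that only the real-part bounds are used; the imaginary-part bounds and the auxiliary implications (i)--(ii) are irrelevant here.

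To conclude, I would observe that $J(\fpx) = -Dg(\fpx)$ is precisely the linearization of the dynamics $\dot x = -g(x)$ at the fixed point $\fpx$. A Hurwitz linearization makes $\fpx$ a hyperbolic fixed point, so by the Hartman--Grobman theorem (as cited in the stability preliminaries) $\fpx$ is locally exponentially, and in particular locally asymptotically, stable. I do not anticipate a genuine obstacle: the argument is a short deduction from the already-proven spectral bounds. The one subtlety worth flagging is the sign bookkeeping for player $2$ — the reversed curvature $D_2^2 f(\fpx) < 0$ is exactly what forces $J_{22}$ to be negative definite rather than positive definite — and this is the single place where the zero-sum structure is essential.
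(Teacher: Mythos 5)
Your proposal is correct and follows essentially the same route as the paper's own proof: both deduce from \autoref{prop:zero-sum-spectrum} that every element of $\spec J(x^\star)$ has real part at most $\lambda^+$ (you split the real and non-real eigenvalue cases explicitly, using $\overline{\lambda}\leq\lambda^+$), note that the differential Nash conditions force $\lambda_1^+,\lambda_2^+<0$ and hence $\lambda^+<0$, and conclude local exponential stability from the Hurwitz linearization. Your sign bookkeeping for player $2$ ($J_{22}=D_2^2f(x^\star)<0$) is exactly the implicit step in the paper's one-line claim that a differential Nash equilibrium gives $\lambda^+<0$.
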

\begin{proof}
From \autoref{prop:zero-sum-spectrum}, we have that the real parts of the spectrum of $J(\fpx)$ are upper-bounded by $\lambda^+$.
If $\fpx$ is a differential Nash equilibrium, then $ \lambda^+< 0$. Thus, $\fpx$ is a locally exponentially stable equilibrium. 
\end{proof}

While the Corollary above appears in ~\cite[Prop. 3.7]{mazumdar2018fundamental}, 
the novelty is showing that it is the special result of \autoref{prop:zero-sum-spectrum}.
%

\begin{remark}
\label{remark:zs:nonnash}
Zero-sum  games  can  have  stable  non-Nash equilibria. Players can get stuck at these spurious attractors of  the  learning  dynamics  where  the  individual  Hessians are not necessarily positive definite, i.e. players may converge to a point that is not a local minimum of their own cost.
\end{remark}


The following applies \autoref{prop:zero-sum-spectrum} and is an example of Remark~\ref{remark:zs:nonnash}. 
\begin{example}
\label{ex:zs}
Consider the game $(f,-f)$ 
with cost
$f:\R^2\times\R^2 \to \R$
given by
\[\cost(x,y)=-x_1^2+3x_2^2-(2y_1^2+6y_2^2)+b(2y_1x_1+y_2x_2)\]
with $z\in\R$.
Direct computation shows that the origin is a stable equilibrium for $|b|>\sqrt{2}$, i.e. $b$ needs to be sufficiently large enough for the dynamics to be stable.
Moreover, by ~\autoref{prop:zero-sum-spectrum}, the imaginary parts of the spectrum of the game Jacobian are bounded by $\pm 2|b|$. 
This example demonstrates that 
%
interaction in zero-sum games can be necessary for stability.
%
\end{example}

\subsection{Potential games ($P=J_{12}, Z=0$)}
Recall that for potential games with potential function $\phi$,
the interaction term is $P=-D_{12}\phi(x)$. 
\begin{theorem}[Spectrum of Potential Game Dynamics]
\label{prop:pot-spectrum}
Consider a potential game $\mc G=(f_1, f_2)$. 
Let
\eqn{
\delta^\pm = \|P\|\tan\left(\frac{1}{2}\arctan \frac{2\|P\|}{|\lambda_1^\pm - \lambda_2^\pm|}\right).
}
The Jacobian $J(x)=-Dg(x)$ of the dynamics $\dot x= -g(x)$ at fixed points $\fpx$ is such that
$\spec J(x) \subset \R$
and
\begin{subequations}
\begin{align}
\lambda^-
- \delta^- 
&\leq \min \spec \jac
\leq 
\lambda^-
\label{eq:pot-spectrum1}
\\
\lambda^+
&\leq \max \spec \jac 
\leq \lambda^+
+ \delta^+.
\label{eq:pot-spectrum2}
\end{align}
\label{eq:pot-spectrum}
\end{subequations}
Furthermore, if $\lambda_2^+< \lambda_1^- $, then $\spec J(\fpx) \cap (\lambda_2^+, \lambda_1^-)$ is empty.
If $\lambda_1^+ < \lambda_2^-$, then $\spec J(\fpx) \cap (\lambda_1^-, \lambda_2^+)$ is empty.
\end{theorem}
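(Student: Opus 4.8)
The plan is to exploit the defining feature of a potential game—that the skew part vanishes, $Z=0$, so that $J(\fpx)=\left[\begin{smallmatrix} J_{11} & P \\ P^\top & J_{22}\end{smallmatrix}\right]$ is a \emph{real symmetric} matrix. This immediately yields $\spec J(\fpx)\subset\R$. Equivalently, at the level of the quadratic numerical range, every $J_{v,w}$ in \eqref{eq:decomp:2x2} is Hermitian (since $z=0$), with the two real eigenvalues
\eqn{
\mu_\pm(v,w)=\tfrac{a+d}{2}\pm\sqrt{\left(\tfrac{a-d}{2}\right)^2+|p|^2},
}
where $a=\langle J_{11}v,v\rangle\in[\lambda_1^-,\lambda_1^+]$, $d=\langle J_{22}w,w\rangle\in[\lambda_2^-,\lambda_2^+]$, and $|p|=|\langle Pv,w\rangle|\le\|P\|$. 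All four assertions of the theorem are then statements about how far these eigenvalues can spread, which I would bound using $\spec J(\fpx)\subseteq\NR^2(J(\fpx))=\bigcup_{v,w}\spec(J_{v,w})$ from one side and the variational characterization of $\max/\min\spec J(\fpx)$ from the other.

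First I would establish the inner bounds $\lambda^+\le\max\spec J(\fpx)$ and $\min\spec J(\fpx)\le\lambda^-$ in \eqref{eq:pot-spectrum1}--\eqref{eq:pot-spectrum2}. These do not follow from the (outer) quadratic numerical range; instead I would use the Rayleigh–quotient characterization of the extreme eigenvalues of the symmetric matrix $J(\fpx)$ together with the block test vectors $(v,0)$ and $(0,w)$: since $\langle J(\fpx)(v,0),(v,0)\rangle=\langle J_{11}v,v\rangle$ and likewise for $(0,w)$, one gets $\max\spec J(\fpx)\ge\max\{\lambda_1^+,\lambda_2^+\}=\lambda^+$ and dually $\min\spec J(\fpx)\le\lambda^-$. (Cauchy interlacing of the principal submatrices $J_{11},J_{22}$ gives the same conclusion.)

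Next, the outer bounds. Using $\spec J(\fpx)\subseteq\NR^2(J(\fpx))$, it suffices to maximize $\mu_+$ over the \emph{relaxed} box $a\in[\lambda_1^-,\lambda_1^+]$, $d\in[\lambda_2^-,\lambda_2^+]$, $|p|\le\|P\|$; this only enlarges the feasible set, so the resulting number is a valid upper bound for $\max\spec J(\fpx)$. A short monotonicity check shows $\mu_+$ is nondecreasing in $a$, in $d$, and in $|p|$, so the maximum sits at the corner $(\lambda_1^+,\lambda_2^+,\|P\|)$, giving $\tfrac12(\lambda_1^++\lambda_2^+)+\sqrt{(\tfrac{\lambda_1^+-\lambda_2^+}{2})^2+\|P\|^2}$; the lower bound on $\min\spec J(\fpx)$ is symmetric via $\mu_-$ at $(\lambda_1^-,\lambda_2^-,\|P\|)$. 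I expect the \textbf{main obstacle} to be purely bookkeeping: reconciling these square-root expressions with the stated closed form for $\delta^\pm$. The reconciliation is the half-angle identity $\tan\!\big(\tfrac12\arctan t\big)=\big(\sqrt{1+t^2}-1\big)/t$ applied with $t=2\|P\|/|\lambda_1^\pm-\lambda_2^\pm|$, which turns $\|P\|\tan(\tfrac12\arctan t)$ into $\sqrt{(\tfrac{\lambda_1^\pm-\lambda_2^\pm}{2})^2+\|P\|^2}-\tfrac12|\lambda_1^\pm-\lambda_2^\pm|$; since $\lambda^+=\tfrac12(\lambda_1^++\lambda_2^+)+\tfrac12|\lambda_1^+-\lambda_2^+|$ (and dually for $\lambda^-$), the corner value equals exactly $\lambda^++\delta^+$, as claimed.

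Finally, for the spectral-gap statement, suppose $\lambda_2^+<\lambda_1^-$. Then for every admissible $(v,w)$ one has $d\le\lambda_2^+<\lambda_1^-\le a$, so $a>d$ and hence $\mu_+\ge\tfrac{a+d}{2}+\tfrac{a-d}{2}=a\ge\lambda_1^-$ while $\mu_-\le\tfrac{a+d}{2}-\tfrac{a-d}{2}=d\le\lambda_2^+$. Thus no eigenvalue of any $J_{v,w}$ lies in the open interval $(\lambda_2^+,\lambda_1^-)$, and since $\spec J(\fpx)\subseteq\NR^2(J(\fpx))$ the interval is free of spectrum. The case $\lambda_1^+<\lambda_2^-$ is identical with the roles of the two blocks interchanged, yielding an empty spectrum on the corresponding gap $(\lambda_1^+,\lambda_2^-)$.
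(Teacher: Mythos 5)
Your proof is correct, and it relates to the paper's as follows: the paper disposes of the entire theorem in one line by citing two results from Tretter's book on block operator matrices (Prop.~1.2.4 for the inequalities, Cor.~1.2.3 for the spectral gaps), whereas you re-derive the content of those citations from scratch in the finite-dimensional setting. You work with the same object the citation is about---the quadratic numerical range of the symmetric block matrix $J(\fpx)$---but supply the three missing ingredients yourself: (i) the inner bounds $\lambda^+\le\max\spec\jac$ and $\min\spec\jac\le\lambda^-$ via Rayleigh quotients on the block test vectors $(v,0)$ and $(0,w)$, correctly flagging that these cannot follow from the outer containment $\spec\jac\subseteq\NR^2(J(\fpx))$ alone; (ii) the outer bounds by maximizing the explicit Hermitian $2\times 2$ eigenvalues $\mu_\pm$ over the relaxed box $a\in[\lambda_1^-,\lambda_1^+]$, $d\in[\lambda_2^-,\lambda_2^+]$, $|p|\le\|P\|$ (the monotonicity check is valid since $\bigl|\partial_a\sqrt{(\tfrac{a-d}{2})^2+|p|^2}\,\bigr|\le\tfrac12$), and collapsing the corner value to $\lambda^++\delta^+$ (resp.\ $\lambda^--\delta^-$) via the half-angle identity $\tan\bigl(\tfrac12\arctan t\bigr)=\bigl(\sqrt{1+t^2}-1\bigr)/t$---a reconciliation the paper never spells out; and (iii) the gap statements from $\mu_+\ge a$ and $\mu_-\le d$ whenever the numerical ranges of $J_{11}$ and $J_{22}$ are disjoint. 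What the paper's route buys is brevity and applicability beyond finite dimensions; what yours buys is a self-contained, checkable argument that additionally exposes a typo in the theorem as stated: under the hypothesis $\lambda_1^+<\lambda_2^-$ the excluded interval must be $(\lambda_1^+,\lambda_2^-)$, not $(\lambda_1^-,\lambda_2^+)$ (the latter is false already in the decoupled case $P=0$), which your interchange-of-blocks argument silently corrects.
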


\begin{proof}
Inequalities in \eqref{eq:pot-spectrum} 
follow from \cite[Prop. 1.2.4]{tretter2008spectral}
and last statements follow from \cite[Cor. 1.2.3]{tretter2008spectral}.
\end{proof}

The following result shows that for potential games,
all stable equilibria of the gradient dynamics are Nash. 
\begin{corollary}[Stability in Potential Games]
\label{proposition:potstable}
Consider a potential game $\mc{G}=(f_1,f_2)$ on finite dimensional action spaces $X_1,X_2$.
If $x^\star$ is a locally exponentially stable equilibrium of $\dot x = -g(x)$, then $x^\star$ is a differential Nash equilibrium of $\mc{G}$.
\end{corollary}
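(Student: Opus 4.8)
The plan is to exploit the defining structure of a potential game, namely that $Z=0$, which forces the game Jacobian to be \emph{symmetric}. Indeed, specializing the decomposition~\eqref{eq:blockdecomp} to $Z=0$ gives
\[
J(\fpx) = \bmat{\Aa & P \\ P^\top & \Dd},
\]
and since the diagonal blocks $\Aa = -D_1^2 f_1(\fpx)$ and $\Dd = -D_2^2 f_2(\fpx)$ are symmetric, $J(\fpx)$ is a real symmetric matrix. This is the key observation; everything else follows from the spectral theorem for symmetric matrices together with the bounds already established in \autoref{prop:pot-spectrum}.

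First I would translate local exponential stability into a spectral statement. The linearization of $\dot x = -g(x)$ at the fixed point is exactly $-Dg(\fpx) = J(\fpx)$, so local exponential stability is equivalent to $J(\fpx)$ being Hurwitz, i.e.\ $\spec J(\fpx) \subset \mb{C}_-^\circ$. Because $J(\fpx)$ is symmetric its spectrum is real (as also guaranteed by \autoref{prop:pot-spectrum}), so the Hurwitz condition reduces to $\max \spec J(\fpx) < 0$.

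Next I would invoke the lower bound on the top of the spectrum from \autoref{prop:pot-spectrum}, namely $\lambda^+ \leq \max \spec J(\fpx)$ in~\eqref{eq:pot-spectrum2}. Chaining this with the previous step yields $\lambda^+ \leq \max \spec J(\fpx) < 0$, hence $\lambda^+ < 0$. Since $\lambda^+ = \max\{\lambda_1^+,\lambda_2^+\}$, this forces $\lambda_1^+ < 0$ and $\lambda_2^+ < 0$; that is, $\max \spec\bigl(-D_i^2 f_i(\fpx)\bigr) < 0$, equivalently $D_i^2 f_i(\fpx) > 0$ for each $i\in\mc{I}$. Finally, because $\fpx$ is a fixed point we have $g(\fpx)=0$, i.e.\ $D_i f_i(\fpx)=0$ for each $i$. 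The first- and second-order conditions $D_i f_i(\fpx)=0$ and $D_i^2 f_i(\fpx)>0$ are precisely the definition of a differential Nash equilibrium, which closes the argument.

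There is no serious obstacle here; the result is short once the symmetry of $J(\fpx)$ is recognized. The only point requiring care is matching the stability convention to the correct sign---that the relevant linearization is $J(\fpx)$ itself and that the inequality needed from \autoref{prop:pot-spectrum} is the \emph{lower} bound $\lambda^+ \le \max\spec J(\fpx)$ (the reverse direction from the zero-sum \autoref{prop:zs-stable}, where the upper bound $\max\spec J \le \lambda^+$ was used). An alternative route that bypasses the spectrum bound is to argue directly: a symmetric Hurwitz $J(\fpx)$ is negative definite, and its principal submatrices $\Aa$ and $\Dd$ are then negative definite as well (test $J(\fpx)$ against vectors supported on a single player's block), giving $D_i^2 f_i(\fpx) > 0$ immediately.
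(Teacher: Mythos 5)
Your proposal is correct and follows essentially the same route as the paper: both arguments chain local exponential stability ($\max\spec J(\fpx)<0$) with the lower bound $\lambda^+ \le \max\spec J(\fpx)$ from \eqref{eq:pot-spectrum2} of \autoref{prop:pot-spectrum} to conclude $\lambda_1^+,\lambda_2^+<0$, i.e.\ $D_i^2 f_i(\fpx)>0$ for each player. Your version is somewhat more complete than the paper's (it makes the first-order condition $D_i f_i(\fpx)=0$ explicit and notes the symmetry of $J(\fpx)$), and the alternative closing argument you sketch---a symmetric Hurwitz matrix is negative definite, hence so are its diagonal blocks---is a valid, more elementary bypass of the spectrum bound altogether.
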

\begin{proof}
If $\fpx$ is stable, then $\max \spec J(x) < 0$. From \eqref{eq:pot-spectrum2} we have that $\max\{\lambda_1^+,\lambda_2^+\} < 0$. Therefore $\fpx$ is a differential Nash equilibrium.
\end{proof}


\begin{remark}
\label{remark:pot:nash}
Potential games can have unstable Nash equilibria.
That is, players can have local minimum of their costs which the gradient learning dynamics cannot converge to due to contribution of the interaction term. 
\end{remark}







The next example applies \autoref{prop:pot-spectrum} and is an example of Remark~\ref{remark:pot:nash}.
\begin{example}
Consider the game $(f_1, f_2)$ with costs $f_i:\R^2\times\R^2\to\R,\ i=1,2,$ given by
\eqn{
f_1(x,y) &=\hphantom{3} x_1^2+2x_2^2 + p(x_1y_1 + x_2y_2),\\
f_2(x,y) &= 3y_1^2+4y_2^2 + p(x_1y_1 + x_2y_2)\\
}
with $p\in\R$.
Direct computation shows that the origin is unstable for $|p|>2\sqrt{3}$, 
i.e. in contrast to the zero-sum case in Example \ref{ex:zs}, larger interaction term causes instability.
By \autoref{prop:pot-spectrum}, we have that $\delta = p\tan(\arctan(p/2)/2)=\sqrt{p^2+4}-2$. Thus, the game Jacobian has eigenvalues that are in $[-8-\delta, -2+\delta]$. 
This example demonstrates that 
in potential games,
a lower bound on the interaction term can be necessary for stability.
\end{example}
\begin{remark}
In the setting of \autoref{prop:pot-spectrum}, we remark that if $J_{22}$ is invertible (without loss of generality), then 
the equilibrium is stable if and only if the Schur complement of $J(x)$ is negative, i.e. $\Aa - P\Dd^{-1}P^\top < 0$.
Corollary~\ref{proposition:potstable} ensures that this equilibrium is also a differential Nash equilibrium.
The proof of this statement is immediate from the properties of definite symmetric matrices (see, e.g.,~\cite{boyd2004convex}).
\end{remark}

\subsection{Robustness to Variation in Time-scale Separation}
Recall 
that $\Lambda=\blockdiag(I_{d_1}, \tau I_{d_2})$ where $\tau$ is the learning rate ratio of the two players 
(Sec. \ref{subsec:partition}).
Below, we study the stability of $\dot x = -\Lambda g(x)$.
Our first result in this setting shows that differential Nash equilibria in zero-sum games are robust in variation in learning rates. 

\begin{theorem}[Robuesntess of Nash in Zero-sum Games] 
Consider a zero-sum game $\mc{G}=(f_1, f_2)=(f,-f)$ with game Jacobian $\J(x)=-\Lambda Dg(x)$. 
Suppose that $\fixedpoint{\x}$ is a differential Nash equilibrium.
%
Then, 
$\fixedpoint{\x}$ is a locally stable equilibrium of 
$\dot{x}=-\Lambda g(x)$ for any learning rate ratio $\tau$.
\label{prop:zerosumrobust}
\end{theorem}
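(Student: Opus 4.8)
The plan is to show that the linearization of $\dot x = -\Lambda g(x)$ at $\fpx$, namely $M := \Lambda J(\fpx)$ (recall $J=-Dg$), is Hurwitz for every ratio $\tau>0$. At a differential Nash equilibrium of a zero-sum game the diagonal blocks $A:=J_{11}$ and $D:=J_{22}$ of $J(\fpx)$ are symmetric and negative definite ($A<0$, $D<0$), while the zero-sum structure forces the off-diagonal blocks to satisfy $C=-B^\top$. The difficulty — and the reason this does not follow at once from \autoref{prop:zero-sum-spectrum} — is that left-multiplying by the timescale matrix $\Lambda$ scales only the second block-row, destroying the symmetry of the diagonal blocks and, a priori, possibly moving eigenvalues across the imaginary axis (the paper itself stresses that learning rates can flip stability in general-sum games).

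The key step that removes this difficulty is to split the timescale matrix symmetrically. Writing $\Lambda^{1/2}=\blockdiag(I_{d_1},\sqrt{\tau}\,I_{d_2})>0$, we have $\Lambda=\Lambda^{1/2}\Lambda^{1/2}$ and $M=\Lambda^{1/2}\,\tilde J\,\Lambda^{-1/2}$ with $\tilde J:=\Lambda^{1/2}J(\fpx)\,\Lambda^{1/2}$. Since $M$ and $\tilde J$ are similar, $\spec M=\spec\tilde J$. A direct computation gives
\[
\tilde J = \begin{bmatrix} A & \sqrt{\tau}\,B \\ -\sqrt{\tau}\,B^\top & \tau D \end{bmatrix},
\]
whose diagonal blocks $A$ and $\tau D$ are again symmetric and negative definite (as $\tau>0$), and whose off-diagonal blocks still obey the zero-sum relation $-\sqrt{\tau}B^\top=-(\sqrt{\tau}B)^\top$. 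Thus $\tilde J$ has exactly the block structure of the negative Jacobian of a zero-sum game evaluated at a differential Nash equilibrium, now with effective individual Hessians $-A>0$ and $-\tau D>0$.

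I would then conclude by applying \autoref{prop:zero-sum-spectrum} to the matrix $\tilde J$. Its diagonal blocks are $J_{11}=A$ and $\tau D$, so the spectral bound places the real parts of $\spec\tilde J$ at most at $\max\{\lambda_1^+,\ \tau\lambda_2^+\}$, where $\lambda_i^+=\max\spec(J_{ii})$. Since $\fpx$ is a differential Nash equilibrium, $A<0$ and $D<0$, hence $\lambda_1^+<0$ and $\lambda_2^+<0$, and this bound is strictly negative for every $\tau>0$. Therefore $\tilde J$ — and with it $M=\Lambda J(\fpx)$ — is Hurwitz, so $\fpx$ is locally exponentially (hence locally) stable for all $\tau$; equivalently, one may invoke \autoref{prop:zs-stable} on the rescaled zero-sum configuration directly.

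The single nonroutine insight is the symmetric split $\Lambda=\Lambda^{1/2}\Lambda^{1/2}$ pushed through by a similarity transform, which restores the zero-sum block structure that a plain row-scaling destroys; this is the step I expect to be the main obstacle, since it is not obvious a priori that the $\tau$-scaled spectrum stays in $\mb{C}_-^\circ$. The same fact admits an equivalent one-line Lyapunov certificate: taking $Q=\Lambda^{-1}>0$ one computes $QM+M^\top Q = J+J^\top = \blockdiag(2A,\,2D)<0$, the off-diagonal terms cancelling precisely because $C=-B^\top$, so Lyapunov's theorem yields the same conclusion for every $\tau$.
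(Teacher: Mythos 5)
Your proof is correct, but it takes a genuinely different route from the paper's. The paper works directly on the scaled Jacobian $\Lambda J(\fpx)$ via its quadratic numerical range: for each pair of unit vectors $(v,w)$ it forms the reduced matrix $J_{v,w}=\bmat{a & z \\ -\tau\conjugate{z} & \tau d}$, with $a,d<0$ by the Nash assumption and $z=\langle Zw,v\rangle$, and verifies the trace and determinant conditions $\trace(J_{v,w})=a+\tau d<0$ and $\det(J_{v,w})=\tau(ad+|z|^2)>0$ for every $\tau>0$; since $\spec(\Lambda J(\fpx))$ is contained in the union of the spectra of these $2\times 2$ matrices, stability follows. You instead restore the unscaled zero-sum structure through the symmetric splitting $\Lambda J(\fpx)=\Lambda^{1/2}\bigl(\Lambda^{1/2}J(\fpx)\Lambda^{1/2}\bigr)\Lambda^{-1/2}$ and reuse \autoref{prop:zero-sum-spectrum} (equivalently \autoref{prop:zs-stable}) on the similar matrix $\bmat{A & \sqrt{\tau}B \\ -\sqrt{\tau}B^\top & \tau D}$, whose diagonal blocks are again symmetric negative definite and whose off-diagonal blocks still satisfy the zero-sum relation; for full rigor you should add the one-line remark that any such block matrix is realized as the Jacobian of a quadratic zero-sum game at a differential Nash equilibrium (or that the proof of \autoref{prop:zero-sum-spectrum} uses only this block structure), since that theorem is stated for game Jacobians rather than arbitrary matrices. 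Your approach is more modular---robustness to $\tau$ is inherited from the $\tau=1$ spectral theorem instead of redoing the quadratic-numerical-range computation---whereas the paper's calculation is uniform with its parallel treatment of potential games in \autoref{prop:potentialrobust} and exhibits the trace/determinant mechanism explicitly. Your closing Lyapunov certificate, $Q=\Lambda^{-1}>0$ with $QM+M^\top Q=J+J^\top=\blockdiag(2A,2D)<0$ (the off-diagonal blocks cancel exactly because $J_{21}=-J_{12}^\top$), is an even more elementary, self-contained argument that neither the paper nor your main argument needs, and it is a nice bonus: it certifies exponential stability with an explicit quadratic Lyapunov function for every learning-rate ratio.
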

\begin{proof}
First, observe that 
$a=\langle J_{11}v,v\rangle$ and $d=\langle J_{22} w, w\rangle$ are negative real numbers for any $v\in \mc{S}_1$ and $w\in \mc{S}_2$
by assumption that $\fixedpoint{\x}$
is a differential Nash equilibrium, i.e. $-D_i^2f_i(\fixedpoint{x}) < 0$ for each $i\in \mc \{1,2\}$.
Second, observe that for zero-sum games, $z=\langle Z w, v\rangle = -\conjugate{\langle Z v, w\rangle}$. 
Therefore, for $\fixedpoint{x}$ to be stable, the eigenvalues of 
\[
J_{v,w}=\bmat{ a &  z \\
-\tau \conjugate{z} & \tau d}
\]
must all be negative. 
Hence, we compute
the trace and determinant conditions  to be $\trace(J_{v,w})=\lambda_1 +\lambda_2=a + \tau d$ and $\det(J_{v,w})=\lambda_1 \lambda_2 = \tau(ad+|z|^2)$.
Notice that,
$\tau (ad+|z|^2) > 0 \iff ad+|z|^2 > 0$,
and $ a + \tau d < 0 \iff a + d < 0$.
Since $a,d<0$ and $\tau>0$,
both of the 
trace and determinant conditions for stability are satisfied, i.e. $\trace(J_{v,w}) < 0$ and $\det(J_{v,w}) > 0$.
Hence, $\fixedpoint{x}$ is a stable equilibrium of $\dot{x}=-\Lambda g(x)$.
\label{prop:zs-stable-lr}
\end{proof}

The stability of $\dot x=-\Lambda g(x)$ implies that there exists a range of learning rates $\gamma$ such that $x(\kk+1)=x(\kk)-\gamma\Lambda g(x(\kk))$ is locally asymptotically stable.

On the other hand,  differential Nash equilibria of potential games are not robust to variation in learning rates in general. However, the following theorem provides a sufficient condition that guarantees its robustness.

\begin{theorem}[Robustness of Nash in Potential Games]
Consider a potential game $(f_1, f_2)$ with potential function $\phi$ and game Jacobian $J(x)=-\Lambda Dg(x)$.
Suppose $\fixedpoint{\x}$ is a differential Nash equilibrium. 
%
Then, 
$\fixedpoint{\x}$ is a locally stable equilibrium of $\dot{x}=-\Lambda g(x)$
for any learning rate ratio $\tau>0$ if 
$\lambda_1^-\lambda_2^- > \max \spec(P^\top P)$. 
\label{prop:potentialrobust}
\end{theorem}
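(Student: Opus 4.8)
The plan is to mirror the argument used for \autoref{prop:zerosumrobust}, reducing stability of the scaled Jacobian $-\Lambda Dg(\fpx)$ to a family of $2\times 2$ problems through the quadratic numerical range. Since $\spec(-\Lambda Dg(\fpx))\subset\NR^2(-\Lambda Dg(\fpx))=\bigcup_{v\in\mc{S}_1,w\in\mc{S}_2}\spec(J_{v,w})$, it suffices to show that every reduced matrix $J_{v,w}$ is Hurwitz. For a potential game $Z=0$, so after scaling the second player's rows by $\tau$ the relevant matrix is
\[
J_{v,w}=\bmat{a & p\\ \tau\conjugate{p} & \tau d},
\]
with $a=\langle J_{11}v,v\rangle$, $d=\langle J_{22}w,w\rangle$, and $p=\langle Pw,v\rangle$. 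The differential Nash hypothesis gives $J_{11}<0$ and $J_{22}<0$, so $a$ and $d$ are real and strictly negative for every $v\in\mc{S}_1,w\in\mc{S}_2$.

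Next I would compute the trace and determinant. The trace is $a+\tau d$, which is real and strictly negative for all $\tau>0$. The off-diagonal product $p\cdot\tau\conjugate{p}=\tau|p|^2$ is real, so the determinant $\tau(ad-|p|^2)$ is real as well, and the characteristic polynomial $\lambda^2-(a+\tau d)\lambda+\tau(ad-|p|^2)$ has real coefficients. Hence the degree-two Routh--Hurwitz criterion applies: $J_{v,w}$ is Hurwitz exactly when its trace is negative and its determinant is positive. The trace condition holds automatically, and because the factor $\tau>0$ does not affect the sign of the determinant, the whole problem collapses to verifying the $\tau$-free inequality $ad-|p|^2>0$ for all unit $v,w$. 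This independence of $\tau$ is precisely what delivers robustness across every learning-rate ratio.

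The remaining step is to bound the two competing quantities uniformly. Since $a,d<0$ we have $ad=|a|\,|d|$, and the smallest value of this product over $v,w$ is attained at the least-magnitude eigenvalues of the diagonal blocks, giving $ad\ge\lambda_1^+\lambda_2^+$ (equivalently, the product of the smallest eigenvalues of the individual Hessians $D_1^2 f_1$ and $D_2^2 f_2$). On the other side, Cauchy--Schwarz gives $|p|=|\langle Pw,v\rangle|\le\|P\|$, so $|p|^2\le\|P\|^2=\max\spec(P^\top P)$. Thus a hypothesis of the form ``product of the extreme diagonal eigenvalues exceeds $\max\spec(P^\top P)$'' forces $ad>|p|^2$ for every $v,w$, making each $J_{v,w}$ Hurwitz and placing $\NR^2(-\Lambda Dg(\fpx))$, and with it $\spec(-\Lambda Dg(\fpx))$, inside $\mb{C}_-^\circ$.

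The hard part will be pinning down exactly which extreme eigenvalues bind $ad$ from below. Because $v$ and $w$ range independently and the vector realizing the largest coupling $|p|$ need not align with the most negative directions of the Hessian blocks, the worst case for the determinant is governed by the \emph{least-magnitude} eigenvalues of $J_{11}$ and $J_{22}$, namely $\lambda_1^+$ and $\lambda_2^+$, rather than the most negative ones. Reconciling this binding product with the constant $\max\spec(P^\top P)$ in the stated hypothesis is the delicate point of the proof; everything else is the same trace/determinant bookkeeping as in the zero-sum case, with the single new feature that the determinant is manifestly independent of $\tau$.
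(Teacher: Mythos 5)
Your proposal follows the paper's own proof route exactly: reduce via the quadratic numerical range to the $2\times 2$ matrices $J_{v,w}$, use the differential Nash hypothesis to get $a,d<0$, compute $\trace(J_{v,w})=a+\tau d$ and $\det(J_{v,w})=\tau(ad-|p|^2)$, note both have real sign independent of $\tau>0$, and so reduce robustness to the single $\tau$-free inequality $ad>|p|^2$ holding for all unit $v,w$. Up to that point you and the paper agree, and your bookkeeping (Routh--Hurwitz for the real-coefficient quadratic, Cauchy--Schwarz for $|p|\leq\|P\|$) is correct.

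The point you flag as ``delicate''---that the binding lower bound on $ad$ is $\lambda_1^+\lambda_2^+$, the product of the \emph{least}-magnitude eigenvalues of $J_{11},J_{22}$, and not $\lambda_1^-\lambda_2^-$---is not a defect of your argument; it is an error in the paper. The paper's proof asserts that $\left(\max\spec(-J_{11})\right)\left(\max\spec(-J_{22})\right)>\max\spec(P^\top P)$, i.e.\ $\lambda_1^-\lambda_2^->\max\spec(P^\top P)$, implies $ad>|p|^2$ for all $v,w$; since $ad$ can be as small as $\lambda_1^+\lambda_2^+\leq\lambda_1^-\lambda_2^-$, that implication fails whenever the Hessians are anisotropic. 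Indeed the theorem as literally stated is false: take $J_{11}=J_{22}=\diag(-1,-10)$ and $P=5e_1e_1^\top$ with $e_1$ the first standard basis vector (a potential game whose origin is a differential Nash equilibrium); then $\lambda_1^-\lambda_2^-=100>25=\max\spec(P^\top P)$, yet already at $\tau=1$ the $(x_1,y_1)$ coordinates decouple into $\left[\begin{smallmatrix}-1 & 5\\ 5 & -1\end{smallmatrix}\right]$, which has eigenvalue $4>0$, so the equilibrium is unstable. Thus you should not try to reconcile your bound with the stated hypothesis---it cannot be done. Your derivation is the correct proof of the corrected statement, with hypothesis $\lambda_1^+\lambda_2^+>\max\spec(P^\top P)$, equivalently $\min\spec(D_1^2f_1)\cdot\min\spec(D_2^2f_2)>\|P\|^2$, which is exactly the condition the trace/determinant argument supports and which correctly excludes the counterexample above.
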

\begin{proof}
First, observe that 
$a=\langle J_{11}v,v\rangle$ and $d=\langle J_{22} w, w\rangle$ are both negative real numbers for any $v\in \mc{S}_1$ and $w\in \mc{S}_2$
by assumption that $\fixedpoint{\x}$
is a differential Nash equilibrium, i.e. $-D_i^2f_i(\fixedpoint{x}) < 0$ for each $i\in \mc \{1,2\}$.
Second, observe that for potential games, $p=\langle P w, v\rangle = \overline{\langle P v, w\rangle}$. 
Therefore, for $\fixedpoint{x}$ to be stable, the eigenvalues of 
\[
J_{v,w}=\bmat{ a &  p \\
\tau \conjugate{p} & \tau d}
\]
must all have negative real components. 
Hence, we compute the 
the trace and determinant conditions  to be $\trace(J_{v,w})=\lambda_1 +\lambda_2=a + \tau d$ 
and $\det(J_{v,w})=\lambda_1 \lambda_2 = \tau(ad-|p|^2)$.
Notice that $ a + \tau d < 0 \iff a + d < 0$ and
$\tau (ad-|p|^2) > 0 \iff ad-|p|^2 > 0
\iff ad > |p|^2 > 0$. In terms of the original matrix, the condition
$\left(\max \spec(-J_{22})\right)\left( \max \spec(-J_{11})\right) > \max \spec(P^\top P)$ implies that $ad>|p|^2$ for all $v,w\in \mc{S}_1 \times \mc{S}_2$. Therefore, $\lambda_1^- \lambda_2^- > \max \spec(P^\top P)$ implies that $x$ is stable for all $\tau>0$.
\end{proof}

\begin{figure}[t]
\def\ticka{++(0,-.2) -- ++(0,.4) node[above,black]}
\def\tickb{++(0,.2) -- ++(0,-.4)node[below,black] }

\subfloat[Zero-sum game where $\delta=\lambda_2^- - \lambda_1^+ > 0$ and $\|J_{12}\|>\delta/2$ ]{
\begin{tikzpicture}[scale=0.8]
\draw[-stealth,gray] (-5,0) -- (5,0);	
\draw[-stealth,gray] (2,-1.5) -- (2,1.5);
\draw(1.8,-1) -- (2.2,-1);
\draw(1.8,1) -- (2.2,1) node [right]{\small $\sqrt{\|J_{12}\|^2-\delta^2/4}$};
\draw [gray] (2,0) node [above right] {$\mathbb C$};
\draw[very thick] (-4,0) -- (-2,0);
\draw[very thick,green!40!black] (-2,0) -- (-1,0);
\draw[very thick] (-1,0) -- (1,0);
\draw (-4,0) \tickb {$\lambda^-$};
\draw[green!40!black] (-2,0) \tickb {$\lambda_1^+$};
\draw[green!40!black] (-1,0) \tickb {$\lambda_2^-$};
\draw[green!40!black] (1,0) \tickb {$\lambda^+$};
\fill[opacity=.2,dashed] (-.5,-1) rectangle (-2.5,1);
\draw(-.5,0) \tickb {} node[below right] {$\overline{\lambda}$};
\draw(-2.5,0) \tickb {} node[below left] {$\underline{\lambda}$};
\draw[dashed] (-.5, -1) -- (2,-1);
\draw[dashed] (-.5, 1) -- (2,1);
\draw(-1.5,0)node[above,green!40!black]{$\delta$};
\draw (-3,0) node [above] {$W(J_{11})$};
\draw (0,0) node [above] {$W(J_{22})$};
\end{tikzpicture}}

\subfloat[Potential game where $\lambda_2^- - \lambda_1^+ > 0$.]{
\centering
\begin{tikzpicture}[scale=0.8]
	\draw[-stealth, gray] (-5,0) -- (5,0) node[above left] {$\mathbb R$};
	\draw  (-3,0) \tickb {$\lambda^-$};
	\draw[very thick] (-3,0) -- (-1,0);
	\draw[very thick] (1,0) -- (3,0);
	\draw (-1,0) \tickb {$\lambda_1^+$};
	\draw (1,0) \tickb {$\lambda_2^-$};
	\draw  (3,0) \tickb  {$\lambda^+$};
	\draw (3.5,0) node[above,black] {\color{green!30!black} $\delta_P^+$};
	\draw[very thick,green!40!black](3,0)--(4,0);
	\draw [green!40!black] (4,0) \ticka {};
	\draw (-3.5,0) node[above,green!40!black]{$\delta_P^-$};
	\draw [very thick,green!40!black](-4,0) -- (-3,0);
	\draw [green!40!black](-4,0)\ticka{};
	\draw  (-2,0) node[above]{$W(J_{11})$};
	\draw  (2,0) node[above]{$W(J_{22})$};
\end{tikzpicture}
}
\caption{\footnotesize 
\emph{Spectrum of learning dynamics near a fixed point in zero-sum and potential games.} 
We illustrate \autoref{prop:zero-sum-spectrum} (a, zero-sum game) and \autoref{prop:pot-spectrum} (b, potential game).
The highlighted and thick regions contain the spectrum of the linearized dynamics.
}
\end{figure}




\section{Instability in General-Sum Games}
\label{sec:general-instability}

As a complementary result to the stability analysis for zero-sum and potential games, we provide a sufficient condition for the \emph{in}stability of fixed points of gradient-based learning dynamics in general sum games.
Our results quantifies the contribution of 
the off-diagonal interaction terms of~\eqref{eq:game-jacobian} in destabilizing equilibria.

We begin by 
expressing the game Jacobian as the sum of symmetric and skew-symmetric matrices, $J=\tfrac{1}{2}(J+J^\top) + \tfrac{1}{2}(J-J^\top)$.
Let $R$ be a rotation that diagonalizes $\frac{1}{2}(J+J^\top)$ and sorts the eigenvalues so that
$\J$ decomposes into
\eqnn{%
RJR^\top = \bmat{M_+ & 0 \\ 0 & M_-} + \bmat{Z_{1} & Z_2 \\ - Z_2^\top & Z_3 }
\label{eq:sort}
}
where $M_+> 0$, $M_-\le 0$ are diagonal and $Z_1$ and $Z_3$ are skew-symmetric. Let $\lambda^-(M_+) > 0$ be the minimum eigenvalue of $M_+$ and $\lambda^+(M_-)\leq 0$ be the maximum eigenvalue of $M_-$.  
\begin{theorem}[Sufficient Conditions for 
Instability in 
Games]
\label{thm:block-instability}
Consider general-sum game $\mc{G}=(\cost_1,\cost_2)$ with $f_i\in C^2(X_1\times X_2,\mb{R})$ where $X_i$  is $\dimm_i$-dimensional for each $i=1,2$.
At a fixed point $\fixedpoint{x}$, $\spec J(\fixedpoint{x})\not\subset \mb{C}_-^\circ$ if
\eqnn{
\|Z_2 \| < \tfrac{1}{2}\big(|\lambda^+(M_-)|+|\lambda^-(M_+)|\big) <
|\lambda^-(M_+)| 
\label{eq:gsunstable}
}
with $M_+, M_-$ and $Z_2$ defined in~\eqref{eq:sort}.
\end{theorem}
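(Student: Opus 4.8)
The plan is to pass to the rotated, spectrum-preserving matrix $\tilde J = RJR^\top$ of~\eqref{eq:sort} and argue entirely through its quadratic numerical range. Its diagonal blocks are $A = M_+ + Z_1$ and $D = M_- + Z_3$, whose symmetric parts are $M_+ > 0$ and $M_- \le 0$ (the real skew-symmetric parts $Z_1, Z_3$ contribute purely imaginary quantities, hence nothing to real parts), and whose off-diagonal blocks are $Z_2$ and $-Z_2^\top$. Writing $\mu_+ = \lambda^-(M_+) > 0$ and $\mu_- = |\lambda^+(M_-)| \ge 0$, for unit vectors $v, w$ the compression is $\tilde J_{v,w} = \bmat{a & b \\ c & d}$ with $\Re a = \langle M_+ v, v\rangle \ge \mu_+$, $\Re d = \langle M_- w, w\rangle \le -\mu_-$, and $|b|, |c| \le \|Z_2\|$, so that $\NR(A) \subset \{\Re z \ge \mu_+\}$, $\NR(D) \subset \{\Re z \le -\mu_-\}$, and $|bc| \le \|Z_2\|^2$.

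Next I would localize the two eigenvalues of each compression. Labeling them $\lambda_A, \lambda_D$ by the branch of $\sqrt{(a-d)^2 + 4bc}$ with nonnegative real part, routine algebra gives $\lambda_A - a = 2bc/(\sqrt{(a-d)^2+4bc} + (a-d))$ and an identical bound for $\lambda_D - d$. Since $\Re(a-d) \ge \mu_+ + \mu_-$ and the branch has nonnegative real part, the denominator has modulus at least $\mu_+ + \mu_-$, so both eigenvalues lie within $r := 2\|Z_2\|^2/(\mu_+ + \mu_-)$ of $\NR(A)$ and $\NR(D)$ respectively. Consequently the two halves of the quadratic numerical range satisfy $\Delta_A \subset \{\Re z \ge \mu_+ - r\}$ and $\Delta_D \subset \{\Re z \le -\mu_- + r\}$.

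The two hypotheses of~\eqref{eq:gsunstable} then read off cleanly. The first, $\|Z_2\| < \tfrac12(\mu_+ + \mu_-)$, is exactly $2r < \mu_+ + \mu_-$, i.e. $\mu_+ - r > -\mu_- + r$; thus $\Delta_A$ and $\Delta_D$ are disjoint and separated by a vertical strip, so $\NR^2(\tilde J)$ splits into two components. The same bound gives $r < \tfrac12(\mu_+ + \mu_-)$, and the second hypothesis $\tfrac12(\mu_+ + \mu_-) < \mu_+$ then yields $r < \mu_+$, placing $\Delta_A \subset \{\Re z > 0\}$. Because $M_+$ is a nonempty block, the spectral-separation property of the quadratic numerical range for disjoint components~\cite{tretter2008spectral} forces $\spec(\tilde J)$ to meet $\Delta_A$; that eigenvalue has positive real part, and since $\spec(\tilde J) = \spec J$ we conclude $\spec J \not\subset \mb{C}_-^\circ$.

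The step I expect to require the most care is this last spectral-splitting claim: the quadratic numerical range is only a superset of the spectrum, so $\Delta_A \subset \{\Re z > 0\}$ alone cannot force an eigenvalue there. I would close the gap by invoking the separation theorem for the quadratic numerical range, under which the disjointness established above distributes $\spec(J)$ between the two components with exactly $\dim M_+ \ge 1$ eigenvalues in $\Delta_A$. As a safeguard independent of that machinery, one could instead homotope $J_t = \bmat{A & tZ_2 \\ -tZ_2^\top & D}$ from $t=0$ to $t=1$ and use the same resolvent estimate to show no eigenvalue crosses the imaginary axis, so the $\dim M_+$ right-half-plane eigenvalues of $J_0 = \operatorname{diag}(A,D)$ persist to $\tilde J$.
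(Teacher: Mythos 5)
Your proposal is correct and follows the same overall strategy as the paper's proof---pass to the rotated, similar matrix in \eqref{eq:sort} and control the spectrum through the quadratic numerical range---but it is substantially more complete than what the paper actually prints. The paper's proof consists only of the observation (via Tretter, Prop.~1.1.12) that the numerical ranges of the diagonal blocks $M_+ + Z_1$ and $M_- + Z_3$ have real parts bounded below by $\lambda^-(M_+)$ and above by $\lambda^+(M_-)$, respectively; it leaves entirely implicit how the bound on $\|Z_2\|$ in \eqref{eq:gsunstable} forces the spectrum to split and why at least one eigenvalue must land in the open right half-plane. You supply exactly these missing steps: the explicit displacement bound $r = 2\|Z_2\|^2/(\mu_+ + \mu_-)$ for the eigenvalues of each compression (your quadratic-formula identity and the denominator estimate $\mathrm{Re}(a-d)\geq \mu_+ + \mu_-$ are correct), the check that the first inequality of \eqref{eq:gsunstable} is equivalent to $2r < \mu_+ + \mu_-$ and hence to disjointness of the two components, and the check that the second inequality pushes the right component into $\{\mathrm{Re}\, z > 0\}$. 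Most importantly, you correctly identify the counting step as the crux---the quadratic numerical range is only a superset of the spectrum, so separation alone does not place an eigenvalue in the right component---and your homotopy $J_t$ with off-diagonal blocks $tZ_2$, $-tZ_2^\top$ is a clean, self-contained resolution: the displacement bound is uniform in $t\in[0,1]$, so the two closed regions containing $\spec(J_t)$ stay disjoint, and the $\dim M_+\geq 1$ right-half-plane eigenvalues of $J_0=\mathrm{diag}(M_++Z_1,\,M_-+Z_3)$ cannot cross the separating strip. This gives a valid, elementary substitute for the spectral-separation machinery in Tretter's book on which the paper's one-line proof implicitly relies.
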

\begin{proof}
Since $Z_1$ and $Z_3$ are skew-symmetric we have that  $\text{Re}\big(M_- + Z_{3})\big)\leq \lambda^+(M_-) \leq 0$ and $0 \leq \lambda^-(M_+) \leq \text{Re}\big(W(M_+ + Z_{1})\big)$~\cite[Prop. 1.1.12]{tretter2008spectral}.  
\end{proof}

The result above works by bounding 
a non-empty subset of the eigenvalues of $J$ in $\mb{C}_+^\circ$ to guarantee instability. 
%
The inequalities in~\eqref{eq:gsunstable} 
are the block matrix equivalent of being 
inside the circle of radius $\sqrt{h^2 + p^2}$ in the scalar case~\cite{chasnov2020stability}.


\begin{example}
Consider a game $(f_1, f_2)$ with costs
$f_i:\R^2\times\R^2\to\R$, $i=1,2$ given by
\eqn{
f_1(x, y) &= -x_1^2+3x_2^2-x_1x_2 + 8x_1y_1 -2x_2y_2,\\
f_2(x, y) &= \hphantom{-}y_1^2+4y_2^2 -y_1y_2 + 2 x_1 y_2 + 2 x_2y_1.
}
By diagonalizing the symmetric component of the game Jacobian, we have that $M_+ = 4.8$, $M_-=-\diag(4.4,5.7,8.7)$. By applying this rotation to the skew-symmetric component, we have that $\|Z_2\| = 4.0$ using the Euclidean norm.
Since $\|Z_2\| < 4.6 < 4.8$,  we have that the origin is unstable.
\end{example}


\section{Numerical Example}
\label{sec:exampleslink}
\label{sec:ex}

\begin{example}
\label{ex:potrot-timescale}
We 
explore
how timescale separation can improve the convergence of game dynamics. 
In particular, we show that when the vector field has enough rotational component, 
timescale separation can lead to 
a well-conditioned Jacobian and thus
faster convergence.
Consider a zero-sum game $\mc G=(f,-f)$ on $\R^2\times\R^2$ with cost given by
\eqn{
f(x,y)= (1-\vep) \left(x_1^2+\tfrac{3}{2}x_2^2-2y_1^2-\tfrac{5}{2}y_2^2\right)+\vep x^\top B y
}
and the matrix $B$ is such that each entry is $B_{ij}=1$ for each $i,j$ except for $B_{22}=-1$.
The parameter $0\leq \vep \leq 1$ controls the amount of rotation in the game vector field.
When $\vep=0$, the game Jacobian is symmetric; when $\vep=1$, the game Jacobian is skew-symmetric. The decomposition of the Jacobian is $J=(1-\vep )S+\vep A$ where $S=S^\top$ and $A=-A^\top$.
Suppose agents descend their individual gradient with learning rates $\gamma_1,\ \gamma_2=\tau\gamma_1$, yielding discrete-time dynamics
\begin{equation}
    \begin{split}
        x(t+1) &= x(t) - \gamma_1 D_1f(x(t),y(t))\\
y(t+1) &= y(t) + \gamma_1 \tau  D_2f(x(t), y(t)).
    \end{split}
    \label{ex:potrot}
\end{equation}
We initialize $x(0), y(0)$ to a vector of ones and simulate~\eqref{ex:potrot} with $\gamma_1=10^{-3}$. We plot the 2-norm of the iterates in Fig.~\ref{fig:potrot:a}.

Recall that the spectrum of $\Lambda J(x,y)$ at an equilibrium $(x,y)$ determines its stability and that the spectral radius of $I+\gamma\Lambda J(x,y)$ determines the  
convergence rate of the discrete-time system above, where $\Lambda = \text{blockdiag}(I_1,\tau I_2)$. 
These quantities with varying $\tau>0$ are plotted in Fig.~\ref{fig:potrot:b}.

By learning with different rates $\gamma_i$, players can take advantage of the rotational component of a vector field to converge faster. 
For $\vep=0.9$, the system converges fastest with $\tau\approx28$. This is indicated by the blue curves in Fig.~\ref{fig:potrot:a}, black squares in Fig.~\ref{fig:potrot:b}, and the right plot in Fig.~\ref{fig:potrot-vectorfields}.

\end{example}

\section{Conclusion}

We 
characterize
local stability 
of Nash equilibria 
in two-player games
by analyzing the spectrum 
of the
gradient learning dynamics.
By decomposing the game Jacobian into zero-sum and potential game components,
we assess how each term contributions
to the stability of Nash or non-Nash equilibria.
We provide tight bounds on the spectrum of the learning dynamics near fixed-points. 
%
Such results give valuable insights 
into the interaction of algorithms and optimization landscape of settings
most accurately 
modeled 
as games.

In the numerical example, we demonstrate an important trade-off between timescale separation between agents and the skew-symmetric component of the learning dynamics.
Agents learning at different rates can result in faster convergence 
if the vector field has enough rotational component.
This indicates a future direction of research: to optimize convergence rate 
given the strength of the skew-symmetric component of the game.

\begin{figure}[t]
\input{figs/potrot2}
\end{figure}

\bibliographystyle{plain}
\bibliography{refs}

\end{document}